\newtheorem{thm}{Theorem}%[section]
\newtheorem{lem}{Lemma}%[section]
\newtheorem{cor}{Corollary}%[section]
\theoremstyle{definition}
\newtheorem*{defn*}{Definition}
\newtheorem*{scheme*}{Scheme}
\theoremstyle{remark}
\newtheorem{remark}{Remark}%[section]
\tikzstyle{plant} = [draw, fill=red!5, rectangle, 
\tikzstyle{block} = [draw, fill=blue!5, rectangle, 
\tikzstyle{sum} = [draw, fill=yellow!10, circle, node distance=1cm]
\tikzstyle{coord} = [coordinate]
\tikzstyle{gain} = [draw, fill=red!5, regular polygon, regular polygon sides=3, shape border rotate=-90]
\tikzstyle{pinstyle} = [pin edge={to-,thick,black}]
\tikzstyle{BitPipe} = [thick, decoration={markings,mark=at position
\providecommand{\thmref}[1]{Theorem~\ref{#1}}
\providecommand{\secref}[1]{Sec.~\ref{#1}}
\providecommand{\figref}[1]{Fig.~\ref{#1}}
\newcommand{\ceil}[1]{\lceil{#1}\rceil}
\providecommand{\Exp}[1]{{\operatorname{exp} \left\{ #1 \right\}}}
\newcommand{\bm}[1]{\mbox{\boldmath{$#1$}}}
\newcommand{\Comment}[1]{}
\newcommand{\old}[1]{}
\newcommand{\rem}[1]{}
\providecommand{\comment}[1]{}
\newcommand{\beqn}[1]{\begin{eqnarray}\label{#1}}
\newcommand{\eeqn}{\end{eqnarray}}
\newcommand{\beq}[1]{\begin{equation}\label{#1}}
\newcommand{\eeq}{\end{equation}}
\newcommand{\vast}{\bBigg@{4}}
\newcommand{\Vast}{\bBigg@{5}}
\providecommand{\Exp}[1]{\ensuremath{{\operatorname{exp} \left\{ #1
\right\}}}}
\newcommand{\Dsub}[2]{{#1}_{_{#2}}}
\newcommand{\Tal}[1]{{\color{red} #1}}
\newcommand{\Uri}[1]{{\color{blue} #1}}
\begin{document}

\title{Extension of the Poltyrev Bound to Binary Memoryless Symmetric Channels}

% %%% Single author, or several authors with same affiliation:
% \author{%
%  \IEEEauthorblockN{Author 1 and Author 2}
% \IEEEauthorblockA{Department of Statistics and Data Science\\
%                    University 1\\
 %                   City 1\\
  %                  Email: author1@university1.edu}% }

%%% Several authors with up to three affiliations:
% \author{%
%   \IEEEauthorblockN{ Tal Philosof }
%   \IEEEauthorblockA{Samsung Semiconductor\\
%                     Research and Development\\
%                     Tel-Aviv, Israel \\
%                     Email: tal.philosof@samsung.com}
%   \and
%   \IEEEauthorblockN{Ariel Doubchak}
%   \IEEEauthorblockA{Samsung Semiconductor\\
%                     Research and Development\\
%                     Tel-Aviv, Israel \\
%                     Email: ariel.d@samsung.com}
%   \and
%   \IEEEauthorblockN{Amit Berman}
%   \IEEEauthorblockA{Samsung Semiconductor\\
%                     Research and Development\\
%                     Tel-Aviv, Israel \\
%                     Email: amit.berman@samsung.com}
%   \and
%   \IEEEauthorblockN{Uri Erez}
%   \IEEEauthorblockA{Electrical Engineering\\
%                     Tel Aviv University\\
%                     Tel Aviv, Israel\\
%                     Email: uri@eng.tau.ac.il}
% }

\author{Tal Philosof, Ariel Doubchak, Amit Berman and Uri Erez 

\thanks{Tal Philosof, Ariel Doubchak and Amit Berman are with Samsung Semiconductor Research and Development Center, Tel-Aviv, Israel (e-mail: \texttt{\{tal.philosof, ariel.d, amit.berman\}@samsung.com}).}
% \thanks{A.~Doubchak is with Samsung Research, Tel Aviv 6492103, Israel (e-mail: \texttt{ ariel.d@samsung.com}).}
% \thanks{A.~Berman is with Samsung Research, Tel Aviv 6492103, Israel (e-mail: \texttt{amit.berman@samsung.com}).}
\thanks{Uri Erez is with the School of Electrical Engineering, Tel Aviv University, Tel Aviv~6997801, Israel (e-mail: \texttt{\{uri\}@eng.tau.ac.il}).}
}

\maketitle

\begin{abstract}
The Poltyrev bound provides a very tight upper bound on the decoding error probability when using binary linear codes for transmission over the binary symmetric channel and the additive white Gaussian noise channel, making use of the code's weight spectrum. In the present work, the bound is extended to  memoryless symmetric channels with a discrete output alphabet. The derived bound is demonstrated on a hybrid BSC-BEC channel. Additionally, a reduced-complexity bound is introduced at the cost of some loss in tightness.
%\Ariel{What about our bound with lower complexity from chapter 5 and BEC and quinary channels from chapters 4.B and 4.D ?}

\end{abstract}
\section{Introduction}\label{sec:intro}

The problem of deriving tight upper bounds for the decoding error probability of block codes, and in particular of linear binary block codes, has been widely studied. 
Many of the tighter bounds are variations on Gallager's first bounding technique \cite{gallager1963low} where one partitions the output space into two sets regions (sets). One region captures the event that corresponds to ``bad" channel instantiation. The probability of this event is readily bounded. Over the complement set, a union bound is employed, assuming the weight spectrum of the code is known. We refer the reader to \cite{shamai2002variations,hof2009performance} for a comprehensive review of such bounds.

One of the tightest bounds within this class is due to Poltyrev \cite{PoltyrevBound94}. More precisely, in the latter work, two bounds were derived using a similar method, one tailored to the binary symmetric channel (BSC), the other to the additive white Gaussian noise channel (the ``tangential sphere bound'' (TSB)). 

One may view Poltyrev's bound for the BSC as an instance of Gallager's technique where two key properties of the BSC are employed to obtain an explicit tight bound. The first is that since the output alphabet is binary, and the channel is memoryless and symmetric, it suffices to account for all possible channel output vectors based only on their weight. The second is that the optimal  (or near optimal, in the case of the TSB) choice of region may be explicitly characterized via a single parameter.

In the present work, Poltyrev's bounding method is extended to general binary memoryless symmetric (BMS) discrete-output channels, employing the method of types as the natural extension of Hamming weight. The derived bounds are of practical relevance to scenarios where the output alphabet size is small and/or the block length is small to moderate.

Such a scenario is typical in 
soft-decision decoding as employed in flash memory devices \cite{InsideNAND_book,3DFlash_book}, where the read operation is an expensive resource, leading to coarse quantization \cite{Kurkoski_TIT_2014}. Furthermore, when assessing coding schemes for such devices, in order to obtain robust performance,  oftentimes the channel is modeled as symmetric. 
Thus, the bounds developed in the present work may be applied as a performance metric for  flash memory devices, for block codes whose Hamming weight distributions are known or for code ensembles having a well-characterized average weight distribution.

The derived bound is demonstrated for BCH codes over  a hybrid BSC-BEC ternary-output channel.

%%%%%%%%%%%%%%%%%%%%%%%%%%%%%%%%%%%%%%%%%%%%%%%%%%%%%%%%%%%%%%%%%%%%%%%%
% \section{Pairwise Error Probability}
% The pairwise error probability (PEP) for general DMC channel with equal messages probability is given by:  
% \begin{align}
%     P_{pep}(\mathbf{x}_1,\mathbf{x}_2)=  0.5\cdot\sum_{\mathbf{y}\in \mathcal{Y}^n} \min\left\{P_{\mathbf{Y}|\mathbf{X}}(\mathbf{y}|\mathbf{x_1}), P_{\mathbf{Y}|\mathbf{X}}(\mathbf{y}|\mathbf{x_2})\right\},
% \end{align}
% where $\mathbf{x}_1,\mathbf{x}_2\in\mathcal{X}^n$ are two transmitted messages. 

% For memoryless channel the bound becomes
% \begin{align}
%    P_{pep}(\mathbf{x}_1,\mathbf{x}_2) = 0.5\cdot \sum_{\mathbf{y}\in \mathcal{Y}^n} \min\left\{\prod_{i=1}^n P_{Y|X}(y_i|x_{1,i}), \prod_{i=1}^n P_{Y|X}(y_i|x_{2,i})\right\} \label{eq:PEP_memoryless}
% \end{align}
% where $\mathbf{x}_j \triangleq (x_{j,1},x_{j,2},\ldots,x_{j,n})$ for $j=1,2$.

% For memoryless and symmetric DMC the bound, the channel can be presented as additive channel, i,e, $Y=X \oplus Z$, where $Z$ is some symmetric noise independent on $X$. The bound is given by 
% \begin{align}
%    P_{pep} = 0.5\cdot \sum_{\mathbf{z}\in \mathcal{Z}^n} \min\left\{\prod_{i=1}^n P_{Y|X}(x_{1,i} \oplus z_i|x_{1,i}), \prod_{i=1}^n P_{Y|X}(x_{2.i} \oplus z_i|x_{2,i})\right\} \label{eq:PEP_memoryless_symmetric}
% \end{align}

\subsection{Notation and Preliminaries}

% Consider a discrete memoryless channel (DMC) with binary input. The channel is characterized by the conditional distribution 
% \begin{align}
% P_{Y|X}(y|x),\label{eq:model_BMS}
% \end{align}
%  where the input $X \in \{0,1\}$ is binary and the channel output $Y$ takes the values $Y\in \mathcal{Y} = \{\pm 1,\pm2,\ldots \pm M\}$. 

Consider a BMS channel with conditional probability distribution  
\begin{align}
P_{Y|X}(y|x),
 \label{eq:model_BMS}
\end{align}
where $X \in \mathcal{X} = \{0,1\}$  and $Y \in \mathcal{Y}=\{-M,\ldots,0,\ldots,M\}$ is the channel output with cardinality $2M+1$ as illustrated in \figref{fig:binary_input_sym_output}.
\begin{figure}[htbp]
    \centering
        \includegraphics[scale=0.25
        ]{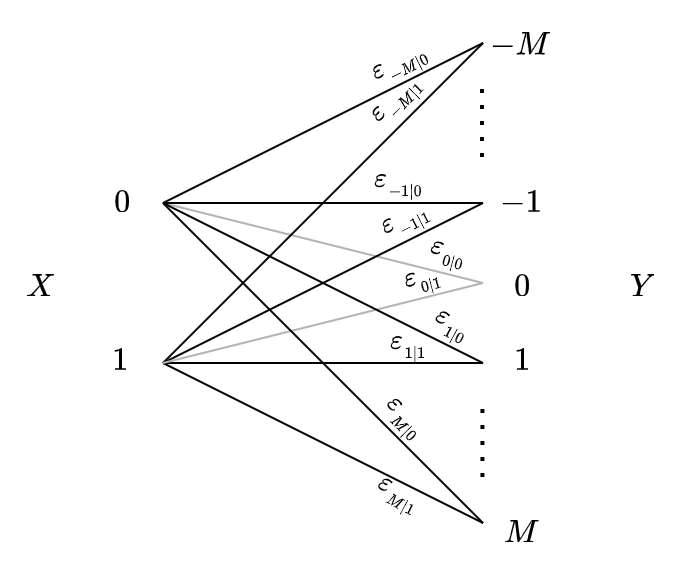}
        \caption{Binary-input symmetric-output channel.}
        \label{fig:binary_input_sym_output}
\end{figure}
The channel is symmetric in the sense that  
\begin{align}
    P_{Y|X}(y|1) = P_{Y|X}(-y|0).
    \label{eq:BMS}
\end{align} 
We denote $P_{Y|X}(y|x) =\varepsilon_{_{y|x}}$ for brevity.

The method of types will be extensively used. 
We briefly recall some pertinent definitions and notations. 
The reader is referred to \cite[Ch. 12]{CoverBook2Edition} for a comprehensive treatment.

The type 
%$P_{n}^\mathbf{y}$ 
of a vector $\mathbf{y}=y_1^n \in \mathcal{Y}^n$ 
%\in \{-M,\ldots,M\}^n$ 
is the relative proportion of occurrences of each symbol.
%\Tal{In the sequel, for brevity of notation, we omit the superscript $\mathbf{y}$, thus $P_n \equiv P_{n}^\mathbf{y}$}.
The set of vectors of type $P_n$
is referred to as the type class of $P_n$, and is denoted by $T(P_n) =\{y_1^n : y_1^n\in P_n\}$. Its size is denoted by $|T(P_n)|$. 
Further, $P_n(j)$ is the number of relative occurrences of symbol $j$
%\in\{-M,\ldots,M\}$ 
in type $P_n$. We use the notation $\mathcal{P}_n$ to denote a set of types of $y_1^n$, and its complement by $\mathcal{P}_n^c$. Thus, $\mathcal{P}_n \cup \mathcal{P}_n^c \triangleq \mathbbm{P}_n$
contains all possible types of the vectors $\mathbf{y} \in \mathcal{Y}^n$.

We denote the number of occurrences of symbol $a$ in a vector $\mathbf{z}$ by $N(a|\mathbf{z})$. Similarly, we define the number of occurrences of symbol $a$ in a type $P_n$ by 
\begin{align}
    N(a|P_n) = nP_n(a)\in\mathbb{Z}_{+},
    \label{eq:PtoN}
\end{align}
where $\mathbb{Z}_{+}$ indicates the non-negative integers.
We refer to a vector $nP_n=[N(a|P_n(a)]_{a=-M}^M$ is the \emph{non-normalized} counterpart of type $P_n$.

A linear code $\mathcal{C}$ of length $n$, with Hamming weight distribution $S_w,\; w=0,1,\ldots,n$, is assumed to be used for transmission over the channel.
%That is, $S_w$ is the number of codewords $\mathbf{c} \in \mathcal{C}$ with Hamming weight $w$.
Due to the linearity of the code, and since the channel is symmetric, the error probability is independent of the codeword sent. Hence, we assume, without loss of generality, that the all-zero codeword was transmitted. 

% The pairwise error probability for memoryless DMC with binary input symmetric output, where $\mathbf{x}_1 = \mathbf{0}$,   $\mathbf{x}_2= \mathbf{1}$ and the noise fall on a specific type, i.e., $\mathbf{y} \in P_n$ is given by
% \begin{align}
%    &P_{pep} (\mathbf{y} \in P_n,\mathbf{x}_1=\mathbf{0},\mathbf{x}_2=\mathbf{1}) \nonumber\\
%    &= 0.5\cdot |T(P_n)|  \min\left\{ P_{Y|X}(-(M-1)|0)^{N_{-(M-1)}(P_n)}\cdots P_{Y|X}(M-1|0)^{N_{M-1}(P_n)}, \right. \nonumber\\
%     &\qquad \left. P_{Y|X}(-(M-1)|1)^{N_{-(M-1)}(P_n)}\cdots P_{Y|X}(M-1|1)^{N_{M-1}(P_n)} \right\} \label{eq:PEP_memoryless_Type}
% \end{align}

%%%%%%%%%%%%%%%%%%%%%%%%%%%%%%%%%%%%%%%%%%%%%%%%%%%%%%%%%%%%%%%%%%%%%%%%

\section{Revisiting the Poltyrev Bound for the BSC}\label{sec:Poltyrev_BSC}

We now recall the Poltyrev bound for a memoryless BSC.  %that is $J=M=2$ 
Here, $M=1$, i.e., $Y \in \{-1,0,1\}$ where the symbol $0$ is received with zero probability and the channel is parameterized by the
crossover probability $\varepsilon$. Specifically, the channel law is 
$
%    Y=(-1)^{X \oplus Z},
    Y=2({X \oplus Z})-1,
$
where $Z = \rm{Bernoulli}(\varepsilon)$.
% \begin{align}
%    Z = 
%     \begin{cases}
%         -1, & w.p. \quad  \varepsilon \\
%         1, &  w.p. \qua 1-\varepsilon
%     \end{cases}.
% \end{align}
%$Z=-1$ with $\varepsilon$ and  
%i.e., $\tilde{Z}\sim \rm{Bernoulli}(\varepsilon)$, $\oplus$ is modulo two addition, and where 
%$X=(-1)^{\tilde{X}}$ and the $Z,Y$ are similarly %related to $\tilde{Z},\tilde{Y}$.

% A linear code $\mathcal{C}$ of length $n$, with Hamming weight distribution $S_w,\; w=0,1,\ldots,n$, is used for transmission over the channel. That is, $S_w$ is the number of codewords $\mathbf{c} \in \mathcal{C}$ with Hamming weight $w$.

%where codeword $\mathbf{c}$ is mapped to  input vector $\mathbf{x}$ by applying the component-wise operation $x_i=(-1)^{c_i}$, $i=1,\ldots,n$.

% We denote the number flipped bits in $\mathbf{z}\in\{0,1\}^n$, that is the number of ones in the noise vector by $N(1|\mathbf{z})$. 

Denote the error event by $\mathcal{E}$ and the pairwise error w.r.t. codeword $\mathbf{c}$ by $\mathcal{E}_{c}$.
Building on Gallager's first bounding technique \cite{gallager1963low}, Poltyrev  derived the following upper bound \cite{PoltyrevBound94} for the error probability 
\begin{align}
    P_e &= P(\mathcal{E}, N(1|\mathbf{z}) < \zeta \mid \mathbf{0}) + P(\mathcal{E}, N(1|\mathbf{z}) \geq \zeta \mid \mathbf{0}) \\
    &\leq P(\mathcal{E}, N(1|\mathbf{z}) < \zeta \mid\mathbf{0}) + P(N(1|\mathbf{z}) \geq \zeta)\\
    &=\sum_{\ell=0}^{\zeta-1} P(\mathcal{E}, N(1|\mathbf{z}) = \ell\mid\mathbf{0}) + P(N(1|\mathbf{z}) \geq \zeta)\\
    &=\sum_{\ell=0}^{\zeta-1} P\left(\bigcup_{\mathbf{c}\in \mathcal{C} \setminus \{\mathbf{0}\}} \mathcal{E}_c, N(1|\mathbf{z}) = \ell \middle| \mathbf{0} \right) + P(N(1|\mathbf{z}) \geq \zeta)\\
    &\leq \sum_{\ell=0}^{\zeta-1} \sum_{\mathbf{c}\in \mathcal{C} \setminus \{\mathbf{0}\}} P\left(\mathcal{E}_c, N(1|\mathbf{z}) = \ell\mid \mathbf{0}\right) + P(N(1|\mathbf{z}) \geq \zeta)\\
    &\leq \sum_{\ell=0}^{\zeta-1} \sum_{w=d_{\rm{min}}}^{2\ell} S_w \sum_{\mu=t_w}^{\min\{\ell,w\}} \binom{w}{\mu} \binom{n-w}{\ell-\mu} \varepsilon^{\ell}(1-\varepsilon)^{n-\ell} 
    \nonumber \\
    &\qquad\qquad + \sum_{\ell = \zeta} ^ n \binom{n}{\ell} \varepsilon^{\ell}(1-\varepsilon)^{n-\ell},
\end{align}
where $\ell$ is the number of flipped bits, $1 \leq \zeta \leq n$ and $t_w = \ceil{w/2}$ and where the conditioning on $\mathbf{c}=\mathbf{0}$ is denoted by $\mid \mathbf{0}$ for short.
%where...

The bound can be further optimized such that the number of flipped bits $\zeta$ brings the bound to a minimum. Hence, Poltyrev's bound states that $P_e$ is upper bounded by
\begin{align}
   \sum_{\ell=0}^{n} \varepsilon^{\ell}(1-\varepsilon)^{n-\ell}\min\left\{\sum_{w=d_{\rm{min}}}^{2\ell} S_w \sum_{\mu=t_w}^{\min\{\ell,w\}} \binom{w}{\mu} \binom{n-w}{\ell-\mu}, \binom{n}{\ell} \right\}.\label{eq:BSC_Poltyrev_bound} 
\end{align}

The minimizing number of flipped bits, namely $\zeta_0$, is the smallest integer $\ell$ which satisfies the following inequality 
\begin{align}
    \sum_{w=d_{\rm{min}}}^{2\ell} S_w \sum_{\mu=t_w}^{\min\{\ell,w\}} \binom{w}{\mu} \binom{n-w}{\ell-\mu}  \geq \binom{n}{\ell}. 
\end{align}
The bound is characterized by the following:
\begin{itemize}
    \item The summation over the output space reduces to enumeration over the weight of the received vector.
    \item The region is explicitly characterized as a Hamming sphere of radius $\ell$, the latter being a parameter optimized.
\end{itemize}

\section{Extension of the Poltyrev Bound to General Binary-Input Symmetric-Output Channels}\label{sec:extended_Poltyrev}

Consider a BMS channel as defined in \eqref{eq:BMS}.
Let us partition the output space into two regions: vectors with type $P_n \in \mathcal{P}_n$ and their complement, where $\mathcal{P}_n$ is a set of types to be chosen. 
The complement set, all vectors $\mathbf{y}$ with type $P_n \in \mathcal{P}_n^c$, will be referred to as the \emph{large-noise region}.

% define a set of types $\mathcal{P}_n$ and its compliment set $\mathcal{P}_n^c$, such that $\mathbbm{P}_n^{\mathcal{Y}} = \mathcal{P}_n \cup  \mathcal{P}_n^c$. 

In order to formulate an upper bound for the probability of error, we further define the following set of types corresponding to pairwise error events
\begin{align}
    &\mathcal{P}_w^\mathcal{E} \triangleq 
    \left\{ P_w: \frac{P_{\mathbf{Y}|\mathbf{X}}(\mathbf{y}_w|\mathbf{x}_w = \mathbf{0})}{ P_{\mathbf{Y}|\mathbf{X}}(\mathbf{y}_w|\mathbf{x}_w =\mathbf{1})}\leq 1,\;  \mathbf{y}_w \in P_w\right\}, 
    \label{eq:pairwise_error_event_region}
\end{align}
where $\mathbf{y}_w = y_1^w$ and $\mathbf{x}_w = x_1^w$.

%we refer to the set $\mathcal{P}_w^\mathcal{E}$ as the \emph{pairwise error-event region}. 
As we have assumes the channel is memoryless, this set takes the form
\begin{align}
    \mathcal{P}_w^\mathcal{E} =\left\{ P_w : \prod_{j=-M}^M  \left[\frac{\Dsub{\varepsilon}{j|0}}{\Dsub{\varepsilon}{j|1}}\right]^{w{P_w}(j)} \leq 1 \right\} \label{eq:error_types}.
\end{align}
Alternatively, in terms of the \emph{Log-Likelihood Ratio} and using definition \eqref{eq:PtoN}, the set can be written as
\begin{align}
    \mathcal{P}_w^\mathcal{E} =\left\{P_w: \sum_{j=-M}^M N(j|P_w)\cdot \rm{LLR}_j \leq 0\right\} \label{eq:error_types_LLR},
\end{align}
where $\rm{LLR}_j \triangleq \log\left(\frac{\Dsub{\varepsilon}{j|0}}{\Dsub{\varepsilon}{j|1}}\right)$. The set of all possible non-normalized types of the output vector $\mathbf{y} \in \mathcal{Y}^n$ is given by
\begin{align}
n\mathbbm{P}_n = \left\{\bm{\ell}: \sum_{j=-M}^M \ell_j = n,\;\ell_j \in \mathbbm{Z}_{+} \right\}.
\end{align}
Finally, define  the set
of all possible ``conditional non-normalized types" of an output subvector $\mathbf{y}_w$ given the output vector $\mathbf{y}$ is of non-normalized type $\bm{\ell}$, as follows:  
\begin{align}
        \mathcal{U}_w(\mathbf{\bm{\ell}}) &\triangleq \Bigg\{ \mathbf{\bm{\mu}} : \sum_{i=-M}^M \mu_i=w,\;\mu_j\in\mathbb{Z}_{+}  \nonumber \\
          &\qquad ,0 \leq \mu_j \leq \ell_j, \; j=-M,\ldots, M \Bigg\}
    \label{eq:rect_region_def2}.
\end{align}
Then, the following holds.

\begin{thm}[Binary Memoryless Symmetric Channel]\label{thm:Extended_Poltyrev}
For a BMS channel \eqref{eq:BMS}, the error probability of a
binary linear code with minimal Hamming distance $d_{\rm min}$ and   weight distribution $S_w, w=d_{\rm min},\dots,n$, is
bounded as follows 
% \begin{align}
%     P_e &\leq \sum_{P_n \in \mathbbm{P}_n^{\mathcal{Y}}} \left(\prod_{j=-M}^M\left(\varepsilon_{_{j|0}}\right)^{nP_n(j)}\right)\nonumber \\
%     &\; \cdot\min\left\{\sum_{w=d_{\rm{min}}}^n S_w \sum_{\stackrel{P_w:}{ P_w \cup P_{n-w} = P_n}} g_1(P_w,P_{n-w}),  g_2(P_n) \right\}\label{eq:DMC_bound} 
% \end{align}
% \begin{align}
%     P_e &\leq \sum_{P_n \in \mathbbm{P}_n^{\mathcal{Y}}} \left(\prod_{j=-M}^M\left(\varepsilon_{_{j|0}}\right)^{nP_n(j)}\right)\nonumber \\
%     &\; \cdot\min\left\{\sum_{w=d_{\rm{min}}}^n S_w \sum_{\stackrel{P_w:}{ P_w \cup P_{n-w} = P_n}} g_1(P_w,P_{n-w}),  g_2(P_n) \right\}\label{eq:DMC_bound} 
% \end{align}
\begin{align}
        &P_e \leq \sum_{\substack{\bm{\ell} \in \mathbb{Z}_{+}^{2M+1} \\  \sum_j \ell_j = n }} \left(\prod_{j=-M}^M\left(\Dsub{\varepsilon}{j|0}\right)^{\ell_{j}}\right) \min\Bigg\{\sum_{w=d_{\rm{min}}}^n S_w\nonumber \\ 
        & \quad \sum_{\substack{\bm{\mu} \in \mathcal{U}_w(\bm{\ell}) \\ \sum_{j}\mu_j\rm{LLR}_j\leq 0}} \binom{w}{\Dsub{\mu}{-M},\dots,\Dsub{\mu}{M}} \nonumber \nonumber \\ 
        & \qquad \cdot\binom{n-w}{\Dsub{\ell}{-M}-\Dsub{\mu}{-M},\dots,\Dsub{\ell}{M}-\Dsub{\mu}{M}} ,\binom{n}{\Dsub{\ell}{-M},\ldots,\Dsub{\ell}{M}}\Bigg\}, \label{eq:DMC_bound_explict}
\end{align}
% where $\mathbbm{P}_n^{\mathcal{Y}}$ is the set of all types of sequences $y_1^n \in \mathcal{Y}^n$,

% \begin{align}
%     g_1(P_w,P_{n-w}) \triangleq \mathbbm{1}\{P_w\in\mathcal{P}_w^\mathcal{E}\}\cdot |T(P_w)||T(P_{n-w})|,  \label{eq:g1_def}
% \end{align}
% and $\mathcal{P}_w^\mathcal{E}$ is defined in \eqref{eq:error_types_LLR},
% $\mathbbm{1}\{\cdot \}$ is the indicator function
% \begin{align}
%     \mathbbm{1}\{P_w\in\mathcal{P}_w^\mathcal{E}\} = 
%     \begin{cases}
%         1, &  P_w\in\mathcal{P}_w^\mathcal{E} \\
%         0, &  P_w\nin\mathcal{P}_w^\mathcal{E}
%     \end{cases}
% \end{align}
% and
% \begin{align}
% g_2(P_n) \triangleq |T(P_n)|\label{eq:g2_def}.
% \end{align}
\end{thm}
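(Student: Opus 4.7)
\emph{Proof plan.} The strategy is to follow Gallager's first bounding technique, mirroring the BSC derivation in \eqref{eq:BSC_Poltyrev_bound} with the scalar noise weight $\ell$ replaced by the non-normalized output type $\bm{\ell}$. Partition the set of non-normalized output types $n\mathbbm{P}_n$ into an arbitrary subset and its complement, corresponding to ``good'' types $\mathcal{P}_n$ and the large-noise region $\mathcal{P}_n^c$, and write
\begin{align*}
P_e \leq P(\mathcal{E},\, \text{type}(\mathbf{y})\in\mathcal{P}_n \mid \mathbf{0}) + P(\text{type}(\mathbf{y})\in\mathcal{P}_n^c \mid \mathbf{0}).
\end{align*}
The specific choice of $\mathcal{P}_n$ is deferred to the last step, where it will be optimized per type.

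For the large-noise term, enumerate over $\bm{\ell}\in\mathcal{P}_n^c$: every vector $\mathbf{y}$ of non-normalized type $\bm{\ell}$ has conditional probability $\prod_j \Dsub{\varepsilon}{j|0}^{\ell_j}$ given $\mathbf{x}=\mathbf{0}$, and there are $\binom{n}{\Dsub{\ell}{-M},\ldots,\Dsub{\ell}{M}}$ such vectors, producing the second term inside the minimum of \eqref{eq:DMC_bound_explict}. For the conditional-error term, apply the union bound over nonzero codewords and group them by Hamming weight. By channel symmetry and the all-zero assumption, the joint probability $P(\mathcal{E}_c,\text{type}(\mathbf{y})=\bm{\ell}\mid\mathbf{0})$ depends on $\mathbf{c}$ only through its weight $w$, so the sum collapses to $\sum_w S_w$ times a per-codeword quantity. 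For a codeword $\mathbf{c}$ of weight $w$ with support $\mathcal{S}$, the memoryless property reduces $\mathcal{E}_c=\{P(\mathbf{y}|\mathbf{c})\geq P(\mathbf{y}|\mathbf{0})\}$ to the requirement that $\text{type}(\mathbf{y}_\mathcal{S})$ lie in $\mathcal{P}_w^{\mathcal{E}}$, i.e., $\sum_j \mu_j \rm{LLR}_j \leq 0$ as in \eqref{eq:error_types_LLR}. For each feasible subtype $\bm{\mu}\in\mathcal{U}_w(\bm{\ell})$ meeting the LLR constraint, the number of vectors $\mathbf{y}$ of overall type $\bm{\ell}$ whose restriction to the $w$ coordinates of $\mathcal{S}$ has type $\bm{\mu}$ equals the multinomial product
\begin{align*}
\binom{w}{\Dsub{\mu}{-M},\ldots,\Dsub{\mu}{M}}\binom{n-w}{\Dsub{\ell}{-M}-\Dsub{\mu}{-M},\ldots,\Dsub{\ell}{M}-\Dsub{\mu}{M}},
\end{align*}
which, after multiplication by the common probability $\prod_j \Dsub{\varepsilon}{j|0}^{\ell_j}$, yields the first term inside the minimum.

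Finally, because the partition $(\mathcal{P}_n,\mathcal{P}_n^c)$ is arbitrary, the bound is tightened by assigning each $\bm{\ell}$ to whichever side contributes the smaller summand; this per-type optimization converts the two terms above into the minimum of \eqref{eq:DMC_bound_explict}. The main obstacle I anticipate is the combinatorial bookkeeping: correctly decomposing the multinomial count for $\mathbf{y}$ of type $\bm{\ell}$ into the support of $\mathbf{c}$ and its complement, and verifying that the pairwise-error region is exactly captured by $\mathcal{U}_w(\bm{\ell})$ together with $\sum_j \mu_j\rm{LLR}_j\leq 0$. A secondary subtlety is the pessimistic tie-breaking built into the weak inequality in the definition of $\mathcal{P}_w^\mathcal{E}$, which is what preserves the union-bound upper bound when an output vector is equally likely under $\mathbf{c}$ and $\mathbf{0}$.
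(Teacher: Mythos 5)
Your proposal is correct and follows essentially the same route as the paper's proof: Gallager's first bounding technique with a per-type partition of the output space, a union bound over codewords grouped by weight, reduction of the pairwise error event to the LLR condition on the conditional type $\bm{\mu}$ of the restriction to the codeword support, and the multinomial count $\binom{w}{\Dsub{\mu}{-M},\dots,\Dsub{\mu}{M}}\binom{n-w}{\Dsub{\ell}{-M}-\Dsub{\mu}{-M},\dots,\Dsub{\ell}{M}-\Dsub{\mu}{M}}$ times the common probability $\prod_j \Dsub{\varepsilon}{j|0}^{\ell_j}$. The only cosmetic difference is that the paper factors the joint probability via independence of $\mathbf{y}_w$ and $\mathbf{y}_{\bar w}$ and multiplies type-class sizes, whereas you count the vectors directly; these are the same computation.
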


\begin{proof}
% The error probability of linear code over symmetric channel can be derived by assuming the all-zero codewords was transmitted. Consider a set of types $\mathcal{P}_n$ and its complement set $\mathcal{P}_n^c$ of the channel output sequences $y_1^n$, 

%Using \eqref{eq:PtoN}, $nP_n \in\mathbbm{Z}_{+}^{2M+1}$ is 
%the number of occurrences of all %$\mathcal{Y}$ symbols in the 
%is the \emph{non-normalized} counterpart of
% $P_n$, the elements of which sum to $n$.}

% For transmitting a linear code over symmetric channel we can evaluate the error probability for transmitting the all-zero codeword,
The error probability can be written as follows:
\begin{align}
    &P_e = P(\mathcal{E}, \mathbf{y} \in \mathcal{P}_n\mid \mathbf{0}) + P(\mathcal{E}, \mathbf{y} \in \mathcal{P}_n^c \mid \mathbf{0})\label{eq:thm:eqn100}\\
    &\leq P(\mathcal{E}, \mathbf{y} \in \mathcal{P}_n\mid \mathbf{0}) + P(\mathbf{y} \in \mathcal{P}_n^c \mid \mathbf{0})\label{eq:thm:eqn110}\\
    &=\sum_{P_n \in \mathcal{P}_n} P(\mathcal{E}, \mathbf{y} \in P_n \mid \mathbf{0} ) + P(\mathbf{y} \in \mathcal{P}_n^c \mid \mathbf{0})\label{eq:thm:eqn120}\\
    &=\sum_{P_n \in \mathcal{P}_n} P\left(\bigcup_{\mathbf{c}\in \mathcal{C}  \setminus \{\mathbf{0}\}} \mathcal{E}_c, \mathbf{y} \in P_n \middle | \mathbf{0} \right) + P(\mathbf{y} \in \mathcal{P}_n^c \mid \mathbf{0})\label{eq:thm:eqn130}\\
    &\leq \sum_{P_n \in \mathcal{P}_n} \sum_{\mathbf{c}\in \mathcal{C} \setminus \{\mathbf{0} \} } P\left(\mathcal{E}_c, \mathbf{y} \in P_n \mid \mathbf{0}\right) + P(\mathbf{y} \in \mathcal{P}_n^c \mid \mathbf{0})\label{eq:thm:eqn140}
\end{align}
where \eqref{eq:thm:eqn100} follows by considering the error event $\mathcal{E}$ jointly with the set of types $\mathcal{P}_n$ and its complement set $\mathcal{P}_n^c$, and \eqref{eq:thm:eqn110} follows by omitting the error event $\mathcal{E}$ in the second term.
%in \eqref{eq:thm:eqn130} note that we define the error event $\mathcal{E}_c$ for decoding codeword $\mathbf{c}\in \mathcal{C} \setminus \{\mathbf{0}\}$ assuming that all-zero codeword was transmitted and follows the error event $\mathcal{E}$ is the union of the error events $\mathcal{E}_c$ for all codewords except the all-zero codeword. 
Finally, \eqref{eq:thm:eqn140} follows by applying the union bound.

Defining $\mathbf{y}_{w} \triangleq y_{1}^w$ and $\mathbf{y}_{\bar{w}} \triangleq y_{w+1}^n$, and using  \eqref{eq:rect_region_def2}, we define the set all possible \emph{conditional types} of $\mathbf{y}_{w}$, given the output vector $\mathbf{y}$ is of type $P_n$ by
%the conditional type version of the set $\mathcal{U}_w({\bm{\ell}})$ as follows. 
\begin{align}
    \mathcal{P}_w(P_n) &\triangleq \frac{1}{w}\mathcal{U}_w\Big(\bm{\ell} = nP_n\Big)  \label{eq:rect_region_def2_type},
\end{align}
%\Uri{where follow

Using the weight distribution $S_w$,
%defining $\mathbf{y}_{w} \triangleq y_{1}^w$ and $\mathbf{y}_{\bar{w}} \triangleq y_{w+1}^n$, and by applying 
%the definition of the conditional set of types $\mathcal{P}_w(P_n)$ defined in 
and applying definition \eqref{eq:rect_region_def2_type}, inequality \eqref{eq:thm:eqn140} becomes

\begin{align}
    &P_e \leq \sum_{P_n \in \mathcal{P}_n} \sum_{w=d_{\rm{min}}}^n S_w\nonumber \\ 
    &\sum_{P_w\in \mathcal{P}_w(P_n)} \hspace{-0.5cm} P\left(\mathcal{E}_c, \mathbf{y}_w \in P_w, \mathbf{y}_{\bar{w}} \in P_{n-w} \mid \mathbf{0}\right)  + 
    P(\mathbf{y} \in \mathcal{P}_n^c \mid \mathbf{0}), \label{eq:thm:eqn150}   
\end{align}
where the type $P_{n-w}$ is the ``residual type" of subvector $\mathbf{y}_{\bar{w}}$ given that $\mathbf{y}$ is of type $P_n$ and that $\mathbf{y}_w$ is of type $P_w$, i.e., $P_{n-w} = \frac{1}{n-w}(nP_n-wP_w)$.

We still need to formulate explicitly the terms  $P(\mathcal{E}_c, \mathbf{y}_w \in P_w, \mathbf{y}_{\bar{w}} \in P_{n-w} \mid \mathbf{0})$ and $P(\mathbf{y} \in \mathcal{P}_n^c \mid \mathbf{0})$.
%The term $P\left(\mathcal{E}_c,  \mathbf{y}_w \in P_w, \mathbf{y}_{\bar{w}} \in P_{n-w} \mid \mathbf{0}\right)$ 
As the channel is memoryless,  the first term can  be expressed as follows.
\begin{align}
& P\left(\mathcal{E}_c, \mathbf{y}_w \in P_w, \mathbf{y}_{\bar{w}} \in P_{n-w} \mid \mathbf{0}\right) \nonumber\\
& = P(\mathbf{y}_{\bar{w}} \in P_{n-w} \mid \mathbf{0}) P(\mathcal{E}_c,\mathbf{y}_w\in P_w \mid \mathbf{0})\label{eq:thm1:eqn200}\\
&= |T(P_{n-w})|\left(\prod_{j=-M}^M\left(\varepsilon_{_{j|0}}\right)^{(n-w) P_{n-w}(j)}\right) \mathbbm{1}\{P_w\in\mathcal{P}_w^\mathcal{E}\}\nonumber \\ 
&\quad \cdot \sum_{\mathbf{y}_w\in P_w} P_{\mathbf{Y}|\mathbf{X}}(\mathbf{y}_w|x_{w}=\mathbf{0})\label{eq:thm1:eqn210}\\
&= \mathbbm{1}\{P_w\in\mathcal{P}_w^\mathcal{E}\}\cdot |T(P_{n-w})| \prod_{j=-M}^M\left(\varepsilon_{_{j|0}}\right)^{(n-w) P_{n-w}(j)} \nonumber \\
&\quad \cdot |T(P_w)| \prod_{j=-M}^M\left(\varepsilon_{_{j|0}}\right)^{w P_{w}(j)} \label{eq:thm1:eqn220}\\
&= \mathbbm{1}\{P_w\in\mathcal{P}_w^\mathcal{E}\}\cdot |T(P_w)||T(P_{n-w})|\nonumber \\ &\quad\cdot\prod_{j=-M}^M\left(\varepsilon_{_{j|0}}\right)^{(w P_{w}(j) + (n-w)P_{n-w}(j))} \label{eq:thm1:eqn230}\\
&= \mathbbm{1}\{P_w\in\mathcal{P}_w^\mathcal{E}\}\cdot |T(P_w)||T(P_{n-w})| \prod_{j=-M}^M\left(\varepsilon_{_{j|0}}\right)^{nP_{n}(j)} \label{eq:thm1:eqn240}
\end{align}
%where \eqref{eq:thm1:eqn200} follows 
%that the error event %$\mathcal{E}_c$ of decoding the %codeword $\mathbf{c}\in %\mathcal{C} \setminus %{\mathbf{0}}$ assuming that the all-zero codeword was transmitted and $\mathbf{y}_w$ are independent of $\mathbf{y}_{\bar{w}}$, 
%since the channel is memoryless,
where \eqref{eq:thm1:eqn210} follows by explicitly writing the probability of type $P_{n-w}$ and by the definition of $\mathcal{P}_w^\mathcal{E}$  \eqref{eq:error_types}, along with that of the indicator function 
\begin{align}
    \mathbbm{1}\{P_w\in\mathcal{P}_w^\mathcal{E}\} = 
    \begin{cases}
        1, &  P_w\in\mathcal{P}_w^\mathcal{E} \\
        0, &  P_w\nin\mathcal{P}_w^\mathcal{E}
    \end{cases}.\label{eq:thm1:indicator}
\end{align}
Further, \eqref{eq:thm1:eqn220} follows by explicitly writing the probability of type $P_{w}$, and \eqref{eq:thm1:eqn240} follows since $nP_{n}(j) = wP_{w}(j)+(n-w)P_{n-w}(j)$.

The term $P(\mathbf{y} \in \mathcal{P}_n^c \mid \mathbf{0})$ is explicitly given by  
\begin{align}
    P(\mathbf{y} \in \mathcal{P}_n^c \mid \mathbf{0}) &= \sum_{P_n \in \mathcal{P}_n^c} |T(P_n)|\prod_{j=-M}^M\left(\varepsilon_{_{j|0}}\right)^{n P_n(j)}. \label{eq:thm1:eqn250}
\end{align}

% Thus, substituting \eqref{eq:thm1:eqn240} and \eqref{eq:thm1:eqn250} into \eqref{eq:thm:eqn150}, we obtain
% \begin{align}
%     P_e &\leq \sum_{P_n \in \mathbbm{P}_n^{\mathcal{Y}}} \left(\prod_{j=-M}^M\left(\varepsilon_{_{j|0}}\right)^{nP_n(j)}\right)\nonumber \\
%     &\; \cdot\min\left\{\sum_{w=d_{\rm{min}}}^n S_w \sum_{P_w\in \mathcal{P}_w(P_n)} g_1(P_w,P_{n-w}),  g_2(P_n) \right\},\label{eq:DMC_bound} 
% \end{align}

Thus, substituting \eqref{eq:thm1:eqn240} and \eqref{eq:thm1:eqn250} and the size of $|T(P_n)|$ into \eqref{eq:thm:eqn150}, we obtain
\begin{align}
    P_e &\leq \sum_{P_n \in \mathbbm{P}_n} \left(\prod_{j=-M}^M\left(\varepsilon_{_{j|0}}\right)^{nP_n(j)}\right)\nonumber \\
    &\; \cdot\min\left\{\sum_{w=d_{\rm{min}}}^n S_w \sum_{P_w\in \mathcal{P}_w(P_n)} g(P_w,P_{n-w}), \binom{n}{\Dsub{\ell}{-M},\ldots,\Dsub{\ell}{M}} \right\},\label{eq:DMC_bound} 
\end{align}
where we have defined
\begin{align}
    & g(P_w,P_{n-w})  \triangleq \mathbbm{1}\{P_w\in\mathcal{P}_w^\mathcal{E}\}\cdot |T(P_w)||T(P_{n-w})| \\
    &=\mathbbm{1}\{P_w\in\mathcal{P}_w^\mathcal{E}\}\cdot \binom{w}{\Dsub{\mu}{-M},\dots,\Dsub{\mu}{M}} \binom{n-w}{\Dsub{\ell}{-M}-\Dsub{\mu}{-M},\dots,\Dsub{\ell}{M}-\Dsub{\mu}{M}}
    .  \label{eq:g1_def}
\end{align}
%and
%\begin{align}
%g_2(P_n) \triangleq %|T(P_n)|\label{eq:g2_def}.
%\end{align}
Substituting the explicit characterization of $\mathcal{P}_w^\mathcal{E}$ \eqref{eq:error_types_LLR}  into \eqref{eq:g1_def}, and the latter into 
%of the type classes and the set of types $\mathcal{P}_w^\mathcal{E}$ as given in 
 into \eqref{eq:DMC_bound},
%\eqref{eq:g1_def} 
%and \eqref{eq:g2_def}, and in turn in \eqref{eq:DMC_bound}, 
yields \eqref{eq:DMC_bound_explict}.

% follows
% \begin{align}
% g_1(P_w,P_{n-w}) \triangleq \mathbbm{1}\{P_w\in\mathcal{P}_w^\mathcal{E}\}\cdot |T(P_w)||T(P_{n-w})|  \label{eq:g1_def2}
% \end{align}
% \begin{align}
% g_2(P_n) \triangleq |T(P_n)|\label{eq:g2_def2} 
% \end{align}

% We denote the set of all types of sequences $y_1^n$ by $\mathbbm{P}_n^{\mathcal{Y}}$. Thus, for a set of types $\mathcal{P}_n$, $\mathbbm{P}_n^{\mathcal{Y}} = \mathcal{P}_n \cup  \mathcal{P}_n^c$. 
    
\end{proof}

\begin{remark}\label{rem1}
    The computational complexity of the bound is polynomial in the block length, specifically $O(n^{2(|\mathcal{Y}|+1)})$ where $|\mathcal{Y}|$ is the output alphabet size.
\end{remark}

\section{Channel Classes with Explicit  Characterization of Error Event Regions}\label{sec:specific_channel}

% Suppose that we are interested in a parametric class of channels. For instance, one may be interested in the class of BSC with crossover probability $\varepsilon$, the latter parameterizing the class of channels or a BEC channel with erasure probability $\delta$. 

While in general, the pairwise error event region \eqref{eq:pairwise_error_event_region} varies with the channel parameters, for certain parametric channel classes of interest, this region is independent of the channel parameter and admits a simple characterization.

Similarly, while the ``large-noise" region may be taken to be any set, in order to obtain a tight bound one needs to choose the region wisely, i.e., taylored to the channel at hand. For certain channel classes, there is a natural candidate set for the region as we now demonstrate.

\subsection{Explicit Characterization for the Class of BSC} \label{sec:specific_channel_BSC}
%Consider a BSC as a special case of a binary-input symmetric-output channel \eqref{eq:model_BMS}. The input is $X\in\{0,1\}$ and the channel output is $Y\in\{-1,0,1\}$, that is $M=1$, and   
%$P_{Y|X}(-1|0) = P_{Y|X}(1|1) = 1-\varepsilon$, $P_{Y|X}(-1|1) = P_{Y|X}(1|0) = \varepsilon$ and $P_{Y|X}(0|1) = P_{Y|X}(0|0) = 0$, where $\varepsilon\in [0,1/2]$. 
%Considering a parametric class of symmetric binary-input binary-output channel as illustrated in 
Consider the class of channels BSC($\varepsilon$), illustrated in
\figref{fig:BSC_model}.
%We may denote the channel by BSC($\varepsilon$). 

%and $\varepsilon_{_{0|0}} = \varepsilon_{_{0|1}} = 0 $, for $\varepsilon \leq 1/2$. 
%Therefore, $Y$ can get only the values $\{-1,1\}$. 

\begin{figure}[htbp]
    \centering
        \includegraphics[scale=0.25
        ]{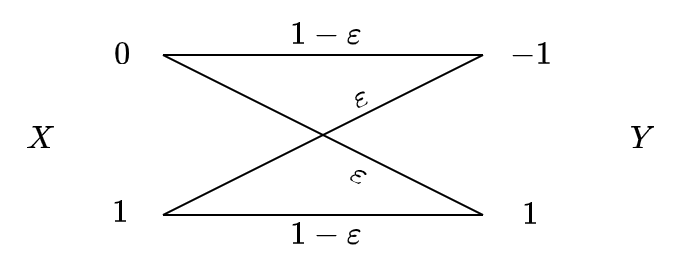}
        \caption{Binary symmetric channel - BSC($\varepsilon$).}
        \label{fig:BSC_model}
\end{figure}

Following \thmref{thm:Extended_Poltyrev}, for BSC($\varepsilon$), the set of all possible non-normalized types of output vectors is given by $n\mathbbm{P}_n=\{(n-\ell_1,\ell_1): 0,\leq \ell_1 \leq n\}$, since $\ell_{-1} = n-\ell_1$.
Additionally the set $\mathcal{U}_w(\ell_1)$ is given by $\mathcal{U}_w(\ell_1)=\{(w-\mu_1,\mu_1): 0\leq \mu_1\leq \min\{\ell_1,w\},\; w-\mu_1\leq n-\ell_1\}$. Further, for BSC($\varepsilon$), since ${\rm LLR}_{-1}=-{\rm LLR}_{1}$, the inequality $\mu_{-1}{\rm LLR}_{-1}+\mu_{1}{\rm LLR}_{1}\leq 0$ becomes $\mu_{1}\geq w/2$, since $\mu_{-1} = n-\mu_{1}$, independent of $\varepsilon$.

Therefore, with the above specification of $n\mathbbm{P}_n$ and $\mathcal{U}_w(\ell_1)$, and applying $\mu_{1}\geq w/2$ in \eqref{eq:DMC_bound_explict}, we obtain \eqref{eq:BSC_Poltyrev_bound}, the well-known Poltyrev bound for the BSC.

\subsection{Explicit Characterization for the Class of BEC}\label{sec:specific_channel_BEC}
% Consider now a BEC. 
% %as a special case of binary-input symmetric-output \eqref{eq:model_BMS}. 
% The input is $X\in\{0,1\}$ and the channel output is $Y\in\{-1,0,1\}$, that is $M=1$, where 
% $P_{Y|X}(-1|0) = P_{Y|X}(1|1) = 1-\delta$, $P_{Y|X}(-1|1) = P_{Y|X}(1|0) = 0$ and $P_{Y|X}(0|0) = P_{Y|X}(0|1) = \delta$, for $\delta \in[0,1]$. Thus, we are considering a parametric class of channels, parametrized by $\delta$, i.e., we may denote the channel by 
Consider the class of channels BEC($\delta$) channels, 
%parametrized by %$\delta$, as %illustrated in
%BEC($\delta$), as depicted in 
illustrated in 
\figref{fig:BEC_model}.

\begin{figure}[htbp]
    \centering
        \includegraphics[scale=0.25
        ]{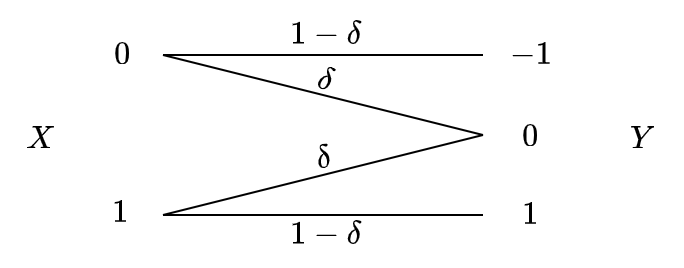}
        \caption{Binary erasure channel - BEC($\delta$).}
        \label{fig:BEC_model}
\end{figure}

%%%%%%%%%%%%%%%%%%%%%%%%%%% new explanation to BEC %%%%%%%%%%%%%%%%%%%%%%%%%%%
Following \thmref{thm:Extended_Poltyrev}, for BEC($\delta$), the set of all possible non-normalized output types is given by $n\mathbbm{P}_n=\{(n-\ell_0-\ell_1,\ell_0,\ell_1):\ell_0\geq 0,\ell_1\geq 0, 0,\leq \ell_0+\ell_1 \leq n\}$. Since $\Dsub{\varepsilon}{1|0} = 0$, we get that $\Dsub{\varepsilon}{1|0}^{\ell_1} = 0$ for $\ell_1 > 0$, and $\Dsub{\varepsilon}{1|0}^{\ell_1} = 1$ for  $\ell_1 = 0$.\footnote{We have used the convention that $0^0=1$.} Thus, we may set $\ell_1 =0$ in  \eqref{eq:DMC_bound_explict}, reducing the summation to be over a single parameter $\ell$. Furthermore, the set $\mathcal{U}_w(\ell)$ amounts to $w$  erasures (all elements of the sub-vector being erased), yielding

\begin{lem}[BEC]
For a memoryless binary erasure channel, i.e., BEC($\delta$),  the error probability is
bounded by
\begin{align}
    P_e &\leq \sum_{\ell=0}^{n} \delta^{\ell}(1-\delta)^{n-\ell}\min\left\{\sum_{w=d_{\rm{min}}}^{n} S_w  \binom{n-w}{\ell-w}, \binom{n}{\ell} \right\}. 
\end{align}
\end{lem}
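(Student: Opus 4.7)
The plan is to obtain the bound as a direct specialization of \thmref{thm:Extended_Poltyrev}. The BEC$(\delta)$ corresponds to the BMS model with $M=1$ and output alphabet $\mathcal{Y}=\{-1,0,1\}$, where the symbol $0$ denotes erasure. Under the assumption that $\mathbf{0}$ was transmitted, the transition probabilities are $\Dsub{\varepsilon}{-1|0}=1-\delta$, $\Dsub{\varepsilon}{0|0}=\delta$ and $\Dsub{\varepsilon}{1|0}=0$. I will plug these into \eqref{eq:DMC_bound_explict} and show that almost all terms collapse.

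First I would argue that the outer sum over $\bm{\ell}=(\ell_{-1},\ell_0,\ell_1)\in\mathbb{Z}_+^3$ with $\ell_{-1}+\ell_0+\ell_1=n$ restricts to $\ell_1=0$: indeed, using the convention $0^0=1$, the factor $\bigl(\Dsub{\varepsilon}{1|0}\bigr)^{\ell_1}=0^{\ell_1}$ vanishes whenever $\ell_1>0$, so those non-normalized types contribute nothing. Writing $\ell\triangleq \ell_0$ for the number of erasures, the surviving outer index is $\bm{\ell}=(n-\ell,\ell,0)$ with $0\leq \ell\leq n$, and the probability prefactor simplifies to $(1-\delta)^{n-\ell}\delta^{\ell}$, while the multinomial coefficient in the second term of the $\min$ becomes $\binom{n}{n-\ell,\ell,0}=\binom{n}{\ell}$.

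Next I would analyze the pairwise-error constraint, as captured by the inner summation over $\bm{\mu}\in\mathcal{U}_w(\bm{\ell})$ satisfying $\sum_j \mu_j \mathrm{LLR}_j\leq 0$. For the BEC, $\mathrm{LLR}_{-1}=+\infty$, $\mathrm{LLR}_{0}=0$, and $\mathrm{LLR}_{1}=-\infty$. The constraint $0\leq \mu_1\leq \ell_1=0$ already forces $\mu_1=0$, so the LLR inequality reduces to $\mu_{-1}(+\infty)\leq 0$, which in turn forces $\mu_{-1}=0$, hence $\mu_0=w$. The only admissible conditional non-normalized type of the length-$w$ sub-vector is therefore $\bm{\mu}=(0,w,0)$, which requires $\ell\geq w$; in that case the inner multinomial equals $\binom{w}{0,w,0}=1$, and the residual multinomial becomes $\binom{n-w}{n-\ell,\ell-w,0}=\binom{n-w}{\ell-w}$.

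Substituting these simplifications back into \eqref{eq:DMC_bound_explict} and using the convention that $\binom{n-w}{\ell-w}=0$ whenever $\ell<w$ yields exactly the stated bound, with the inner sum $\sum_{w=d_{\min}}^{n} S_w\binom{n-w}{\ell-w}$ against the comparator $\binom{n}{\ell}$. The only subtle step is the handling of the extended-real LLRs combined with the $0^0=1$ convention; everything else is bookkeeping. I would therefore spend care on presenting the collapse $\ell_1=\mu_1=\mu_{-1}=0$ cleanly, possibly by arguing directly from \eqref{eq:error_types} that the likelihood ratio condition $\prod_j(\Dsub{\varepsilon}{j|0}/\Dsub{\varepsilon}{j|1})^{wP_w(j)}\leq 1$ admits only the all-erasure sub-type under BEC, which avoids any appeal to infinite LLRs.
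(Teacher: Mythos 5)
Your proposal is correct and follows essentially the same route as the paper: specialize Theorem~\ref{thm:Extended_Poltyrev}, use $\Dsub{\varepsilon}{1|0}=0$ (with the $0^0=1$ convention) to kill all types with $\ell_1>0$, and observe that the pairwise-error region collapses to the all-erasure sub-type $\bm{\mu}=(0,w,0)$, so the multinomials reduce to $\binom{n-w}{\ell-w}$ and $\binom{n}{\ell}$. Your extra care with the infinite LLRs (and the suggestion to argue from \eqref{eq:error_types} instead) is a welcome refinement of a step the paper states only informally.
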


%%%%%%%%%%%%%%%%%%%%%%%%%%%%%%%%%%%%%%%%%%%%%%%%%%%%%%%%%%%%%%%%%%%%%%%%
\subsection{Explicit Characterization for Hybrid BSC-BEC }\label{sec:specific_channel_BSC-BEC}
Consider 
%a binary-input symmetric-output channel \eqref{eq:model_BMS} where the input is $X\in\{-1,+1\}$ and the channel output is $Y\in\{-1,0,1\}$, that is $M=1$. This channel can be modeled in general 
a hybrid BSC-BEC channel as illustrated in 
%that
%as  a combination of BSC channel with crossover probability $\varepsilon$, and a BEC with erasure probability $\delta$, as illustrated in 
\figref{fig:BSC_BEC},
%Therefore, the crossover events have probability $P_{Y|X}(-1|1) = P_{Y|X}(1|-1) = \varepsilon$, the erasure events have probability $P_{Y|X}(0|-1) = P_{Y|X}(0|1) = \delta $, and the correct reception events have probability $P_{Y|X}(-1|-1) = P_{Y|X}(1|1) = 1-\varepsilon-\delta$, 
where $\varepsilon\in [0,1/2]$ and $\delta\in[0,1]$ such that $\varepsilon+\delta\leq 1 $. 
%Thus, we are considering a parametric class of channels, parametrized by $(\varepsilon,\delta)$, i.e., we may denote the channel by BSC-BEC($\varepsilon,\delta$). 

\begin{figure}[htbp]
    \centering
        \includegraphics[scale=0.3
        ]{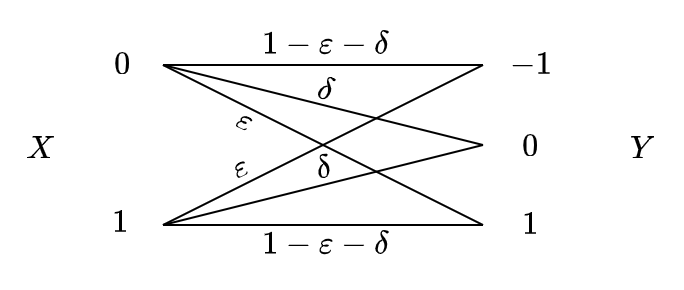}
        \caption{Hybrid BSC-BEC($\varepsilon,\delta$)  channel.}
        \label{fig:BSC_BEC}
\end{figure}

%Following \thmref{thm:Extended_Poltyrev}, for BSC-BEC($\varepsilon,\delta$), 
The set $n\mathbbm{P}_n$ is now given by 
\begin{align}
n\mathbbm{P}_n=\{(n-\ell-u,u,\ell): 0\leq \ell \leq n,0 \leq u \leq n - \ell\},\label{eq:BSC-BEC_Un}
\end{align}
where $u$ represents the number of erased bits ($0\rightarrow 0, 1 \rightarrow 0$) and $\ell$ represents the number of flipped bits ($0\rightarrow 1,1\rightarrow -1$). In addition, the set $\mathcal{U}_w(\ell,u)$ is given by 
\begin{align}
\mathcal{U}_w(\ell,u)&=\left\{(w-\rho-\mu,\rho,\mu): 0\leq \mu \leq \min(\ell,w),\nonumber\right. \\ &\left. \qquad\qquad\qquad 0 \leq \rho \leq \min(u,w), \mu+\rho \leq w\right\}.\label{eq:BSC-BEC_Uw}     
\end{align}

We substitute $\ell_{-1} \rightarrow n-\ell-u,\ell_0 \rightarrow u, \ell_1 \rightarrow \ell$ and $\mu_{-1} \rightarrow w-\rho-\mu, \mu_0 \rightarrow \rho, \mu_1 \rightarrow \mu$ in \eqref{eq:DMC_bound_explict}, using the definitions of $\mathbbm{P}_n$, $\mathcal{U}_w(\ell,u)$ from \eqref{eq:BSC-BEC_Un}, \eqref{eq:BSC-BEC_Uw}, respectively. Furthermore, with this substition, the condition $\sum_{j=-M}^M\mu_j\rm{LLR}_j\leq 0$ in \eqref{eq:DMC_bound_explict} becomes 
$\mu\geq (w-\rho)/2$. This follows from the fact that $M=1$,  ${\rm LLR}_{-1}=-{\rm LLR}_{1}$ and ${\rm LLR}_{0} = 0$. Consequently, we obtain the following bound for the hybrid BSC-BEC channel.

\begin{lem}[BSC-BEC]
For a BMS channel \eqref{eq:BMS} and $M=1$, i.e., BSC-BEC($\varepsilon,\delta$), the error probability is bounded by
\begin{align}
    P_e &\leq \sum_{\ell=0}^{n}\sum_{u=0}^{n-\ell} \varepsilon^{\ell}\delta^{u}(1-\varepsilon-\delta)^{n-\ell-u}\nonumber\\
    &\quad\cdot\min\left\{\sum_{w=d_{\rm{min}}}^{n} S_w 
    \sum_{\rho=0}^{\min\{u,w\}}
    \sum_{\mu=\frac{w-\rho}{2}}^{\min\{\ell,w\}} 
    \binom{w}{\mu,\rho,w-\mu-\rho} \right.\nonumber \\ 
    &\qquad\qquad \left. \cdot \binom{n-w}{\ell-\mu,u-\rho,n-w-(\ell-\mu)-(u-\rho)}, \right.\nonumber \\
    &\qquad\qquad\qquad \left. \binom{n}{\ell,u,n-\ell-u} \right\}.\label{eq:BSC-BEC_bound} 
\end{align}
\end{lem}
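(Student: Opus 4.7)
The plan is to specialize \thmref{thm:Extended_Poltyrev} to the hybrid BSC-BEC channel, following the template already used in \secref{sec:specific_channel_BSC} and \secref{sec:specific_channel_BEC}. I will first set $M=1$ so that $\mathcal{Y}=\{-1,0,1\}$, and read off from \figref{fig:BSC_BEC} the transition probabilities conditioned on $X=0$: $\varepsilon_{-1|0}=1-\varepsilon-\delta$, $\varepsilon_{0|0}=\delta$, and $\varepsilon_{1|0}=\varepsilon$, with the $X=1$ row determined by the symmetry \eqref{eq:BMS}.

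Next, the non-normalized types will be parameterized as in the lemma statement: writing the full-vector counts as $\bm{\ell}=(n-\ell-u,\,u,\,\ell)$ (with $\ell$ flipped positions and $u$ erased positions) yields $n\mathbbm{P}_n$ of \eqref{eq:BSC-BEC_Un}, while writing the per-support counts as $\bm{\mu}=(w-\rho-\mu,\,\rho,\,\mu)$ yields $\mathcal{U}_w(\ell,u)$ of \eqref{eq:BSC-BEC_Uw}. Substituting these into the weighting factor $\prod_{j=-M}^{M}\varepsilon_{j|0}^{\ell_j}$ in \eqref{eq:DMC_bound_explict} produces $\varepsilon^{\ell}\delta^{u}(1-\varepsilon-\delta)^{n-\ell-u}$ immediately, and the two multinomial coefficients in \eqref{eq:DMC_bound_explict} rewrite as those in \eqref{eq:BSC-BEC_bound}; the identity $nP_n(j)=wP_w(j)+(n-w)P_{n-w}(j)$ used in the proof of \thmref{thm:Extended_Poltyrev} guarantees that the residual $(n-w)$-subvector has exactly $\ell-\mu$ flips and $u-\rho$ erasures, matching the second multinomial.

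The only non-trivial step is translating the LLR pairwise-error constraint $\sum_{j=-M}^{M}\mu_j\,\mathrm{LLR}_j\leq 0$ into the explicit lower limit $\mu\geq(w-\rho)/2$ on the innermost sum. Direct computation gives $\mathrm{LLR}_{-1}=\log\!\big((1-\varepsilon-\delta)/\varepsilon\big)$, $\mathrm{LLR}_{0}=0$ (since the erasure output is equiprobable under both inputs), and $\mathrm{LLR}_{1}=-\mathrm{LLR}_{-1}$ by the symmetry \eqref{eq:BMS}. Hence the constraint reduces to $(w-\rho-2\mu)\,\mathrm{LLR}_{-1}\leq 0$, and the standing hypotheses $\varepsilon\leq 1/2$ and $\varepsilon+\delta\leq 1$ ensure $\mathrm{LLR}_{-1}\geq 0$, collapsing the condition to $\mu\geq(w-\rho)/2$. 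I expect the main obstacle to be purely bookkeeping: keeping the full-vector indices $(\ell,u)$ cleanly separated from the per-support indices $(\mu,\rho)$, and verifying that the range constraints $0\leq\rho\leq\min\{u,w\}$, $0\leq\mu\leq\min\{\ell,w\}$, and $\mu+\rho\leq w$ from $\mathcal{U}_w(\ell,u)$ remain compatible with the substitution so that the outer minimization in \eqref{eq:DMC_bound_explict} takes the exact form of \eqref{eq:BSC-BEC_bound}.
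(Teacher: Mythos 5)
Your proposal is correct and follows essentially the same route as the paper: \secref{sec:specific_channel_BSC-BEC} obtains the lemma by exactly the substitutions $\ell_{-1}\to n-\ell-u$, $\ell_0\to u$, $\ell_1\to \ell$ and $\mu_{-1}\to w-\rho-\mu$, $\mu_0\to\rho$, $\mu_1\to\mu$ in \thmref{thm:Extended_Poltyrev}, with the pairwise-error constraint collapsing to $\mu\geq(w-\rho)/2$ via $\mathrm{LLR}_0=0$ and $\mathrm{LLR}_{-1}=-\mathrm{LLR}_1$. One small imprecision: $\mathrm{LLR}_{-1}\geq 0$ actually requires $2\varepsilon+\delta\leq 1$, not merely $\varepsilon\leq 1/2$ and $\varepsilon+\delta\leq 1$ (e.g., $\varepsilon=0.4$, $\delta=0.5$ violates it), though the paper relies on the same implicit assumption.
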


\subsection{Explicit Characterization for a Class of Quinary Channels}\label{sec:specific_channel_quinary}
 Consider a BMS channel as described in \figref{fig:Quinary_channel}, where $\varepsilon,\delta,\gamma \in [0,1]$, such that $\varepsilon+\delta + \gamma\leq 1 $,
% \Uri{$\varepsilon<1-(\varepsilon+\delta + \gamma)$} \Tal{Uri, isn't it $\varepsilon<1-(\delta + \gamma)$ ??}
% . 
We denote the channel by Quinary($\varepsilon,\delta, \gamma$).

% \begin{table}[ht]
% \[
%   \begin{array}{ r || c | c | c | c | c | c}
%     & y=-2 & y=-1 & y=0 & y=1 & y=2 & \\
%     \hhline{======}
%     x=1 & \alpha & \gamma & \delta  & \varepsilon & 0 \\
%     x=-1 &  0 & \varepsilon & \delta & \gamma & \alpha
%   \end{array}
%   \]
%   \caption{Probability transition matrix.}
% \label{table:5_channel}
% \end{table}

\begin{figure}[htbp]
    \centering
        \includegraphics[scale=0.25
        ]{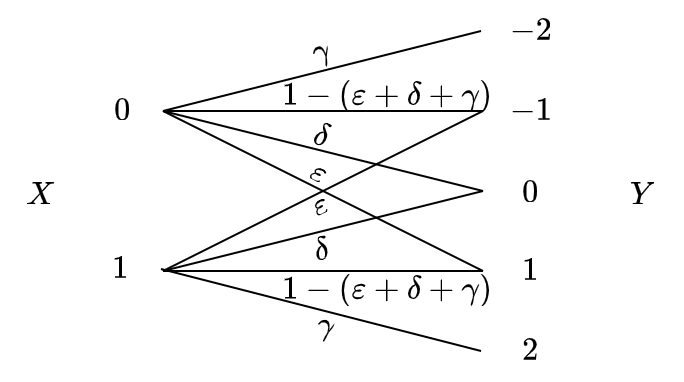}
        \caption{Quinary($\varepsilon,\delta, \gamma$) channel.}
        \label{fig:Quinary_channel}
\end{figure}

We may interpret the channel output as follows: $y=\pm 2$ represents a ``strong-correct" decision (the channel input is known with certainty), $y=\pm 1$ represents a ``weak-correct" decision (the channel input can be guessed with a probability of success greater than half),
%, i.e., $\varepsilon + \delta + \gamma \leq 0.5$), 
and $y=0$ represents an erasure (the input values are equally likely). 

These levels of certainty 
arise in flash memory when reading a page 
 \cite{InsideNAND_book} where, in practice,  
a strong-correct decision has a very small probability of error. In our model, this probability is zero, as depicted in \figref{fig:Quinary_channel}. %The weak-correct  has a probability $\varepsilon$ of encountering an error bit. 
Furthermore, in successive decoding of flash memory pages,
the certainty of a strong correct becomes even more distinct.

% The set $\mathcal{P}_n$ is the set of all  sequences with $\ell$ weak incorrect, $u$  erasures, and $v$ strong corrects such that $0 \leq \ell + u + v\leq n$ for $0\leq\ell,u\leq n -v$, while the rest are weak correct. Further, for a specific $\ell$ and $u$, the type $P_w$ of length $w$ has $\mu$ ones and $\rho$ erasures , thus $\mu\in\{0,1,\ldots,\min\{\ell,w\}\}$ and $\rho\in\{0,1,\ldots,\min\{u,w\}\}$ such that $\mu+\rho \leq w$. Consequently, the complement type $P_{n-w}$ of length $n-w$, has $\ell-\mu$ ones and $u-\rho$ erasures. 

The pairwise error-event region $\mathcal{P}_w^\mathcal{E}$, as defined in \eqref{eq:error_types_LLR}, now contains all the vectors in $P_w$ that satisfy $2\mu + \rho \geq w$ \emph{and} $\nu = 0$, where $\nu$, $\rho$, and $\mu$ represent the numbers of strong-corrects, weak-corrects and erasures, respectively. The large-noise region is the complement of a Hamming sphere as in the previous examples. Both sets are independent of the parameters $(\varepsilon, \delta, \gamma)$.

The bound for the Quinary($\varepsilon,\delta, \gamma$) channel is stated without proof. 
\begin{lem}[Quinary channel]
For a BMS channel \eqref{eq:BMS} with quinary output, i.e., Quinary($\varepsilon,\delta,\gamma$), the error probability is bounded by
\begin{align}
    P_e &\leq \sum_{\bm{\ell}\in \mathbb{Z}_{+}^n:\sum_j\ell_j =n} \varepsilon^{\ell_1}\delta^{\ell_2}\gamma^{\ell_3}(1-\varepsilon-\delta-\gamma)^{\ell_0} \nonumber\\
    &\quad\cdot\min\left\{\sum_{w=d_{\rm{min}}}^{n} S_w 
    \sum_{\substack{\mu,\rho \\ 2\mu+\rho\geq w}}
    \binom{w}{\mu,\rho,n-\mu-\rho} \right.\nonumber \\ 
    &\qquad \left. \cdot \binom{n-w}{\ell_1-\mu,\ell_2-\rho,n-w-(\ell_1-\mu)-(\ell_2-\rho)}, \right.\nonumber \\
    &\qquad\qquad\qquad \left. \binom{n}{\ell_0,\ell_1,\ell_2,\ell_3} \right\}.\label{eq:Quinary_bound} 
\end{align}
\end{lem}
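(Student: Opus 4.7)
The plan is to specialize \thmref{thm:Extended_Poltyrev} to $M=2$, following the template already used for the BSC, BEC, and BSC-BEC lemmas. First, read off the transition probabilities from input $0$ using \eqref{eq:BMS}: $\varepsilon_{-2|0}=1-\varepsilon-\delta-\gamma$, $\varepsilon_{-1|0}=\gamma$, $\varepsilon_{0|0}=\delta$, $\varepsilon_{+1|0}=\varepsilon$, and crucially $\varepsilon_{+2|0}=0$.

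I would then simplify the outer sum over $\bm{\ell}\in\mathbb{Z}_{+}^{5}$ in \eqref{eq:DMC_bound_explict}. Since $\varepsilon_{+2|0}=0$, the convention $0^{0}=1$---used exactly as in the BEC analysis of \secref{sec:specific_channel_BEC}---forces $\ell_{+2}=0$ in every term that contributes to the bound. Relabelling the four surviving counts as $\ell_0\leftarrow\ell_{-2}$, $\ell_3\leftarrow\ell_{-1}$, $\ell_2\leftarrow\ell_{0}$, and $\ell_1\leftarrow\ell_{+1}$ immediately yields the outer weighting $\varepsilon^{\ell_1}\delta^{\ell_2}\gamma^{\ell_3}(1-\varepsilon-\delta-\gamma)^{\ell_0}$ and the outer multinomial $\binom{n}{\ell_0,\ell_1,\ell_2,\ell_3}$ appearing in \eqref{eq:Quinary_bound}.

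The heart of the argument is the characterization of $\mathcal{P}_w^{\mathcal{E}}$ via \eqref{eq:error_types_LLR}. Computing $\mathrm{LLR}_{-2}=+\infty$, $\mathrm{LLR}_{-1}=\log(\gamma/\varepsilon)$, $\mathrm{LLR}_{0}=0$, $\mathrm{LLR}_{+1}=-\log(\gamma/\varepsilon)$ and $\mathrm{LLR}_{+2}=-\infty$, the constraint $\sum_{j}\mu_j\,\mathrm{LLR}_j\leq 0$ forces the number of $y=-2$ symbols in $\mathbf{y}_w$ to be zero (else the sum is $+\infty$); this is the $\nu=0$ condition stated in the text. Using the antisymmetry $\mathrm{LLR}_{+1}=-\mathrm{LLR}_{-1}$ together with $\mu_{-1}+\mu_{0}+\mu_{+1}=w$, the remaining inequality collapses to $\mu_{+1}\geq\mu_{-1}$, which in the lemma's notation $\mu\triangleq\mu_{+1}$, $\rho\triangleq\mu_{0}$ becomes $2\mu+\rho\geq w$ and is independent of $(\varepsilon,\delta,\gamma)$, exactly as announced in \secref{sec:specific_channel_quinary}.

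Finally, plugging everything into \eqref{eq:DMC_bound_explict} produces $|T(P_w)|=\binom{w}{\mu,\rho,w-\mu-\rho}$ (the strong-correct and strong-incorrect counts in $\mathbf{y}_w$ being pinned to zero), while $|T(P_{n-w})|$ is the residual multinomial in the counts of $\mathbf{y}_{\bar{w}}$; the range $\bm{\mu}\in\mathcal{U}_w(\bm{\ell})$ reduces to $0\leq\mu\leq\min(\ell_1,w)$ and $0\leq\rho\leq\min(\ell_2,w)$, and \eqref{eq:Quinary_bound} drops out. The one delicate point is the joint $0^{\ell_{+2}}$ bookkeeping and $\nu=0$ constraint needed to eliminate the infinite-LLR contributions consistently; this is structurally identical to the corresponding step in \secref{sec:specific_channel_BEC} and presents no fundamentally new obstacle.
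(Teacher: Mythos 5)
Your route is the paper's route: the paper states this lemma explicitly without proof, and the intended derivation is precisely the specialization of \thmref{thm:Extended_Poltyrev} to $M=2$ that you carry out, mirroring the BSC, BEC and BSC--BEC cases of \secref{sec:specific_channel}. Your identification of the transition probabilities, the elimination of $\ell_{+2}$ via the $0^0$ convention, the LLR computation, and the reduction of $\sum_j\mu_j\mathrm{LLR}_j\le 0$ to the pair of conditions $\mu_{-2}=0$ and $2\mu+\rho\ge w$ are all correct. The one assumption you should make explicit is $\gamma\ge\varepsilon$, so that $\mathrm{LLR}_{-1}\ge 0$ and the inequality collapses in the direction you claim; this is exactly what makes $y=-1$ a ``weak-correct'' symbol, and your reading of $(\mu,\rho)=(\mu_{+1},\mu_0)$ is the only one consistent with the displayed bound (the paper's prose assignment of $\rho$ and $\mu$ is internally inconsistent with its own formula).

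The gap is in your last sentence, where you assert that \eqref{eq:Quinary_bound} ``drops out.'' It does not, quite. With $\mu_{-2}$ pinned to zero, the counts in $\mathbf{y}_w$ are $(\mu_{-1},\mu_0,\mu_{+1})=(w-\mu-\rho,\rho,\mu)$, so $|T(P_w)|=\binom{w}{\mu,\rho,w-\mu-\rho}$ as you say (the printed $n-\mu-\rho$ is a typo). But the residual subvector $\mathbf{y}_{\bar w}$ still ranges over \emph{four} symbols, so the exact specialization of \eqref{eq:DMC_bound_explict} gives $|T(P_{n-w})|=\binom{n-w}{\ell_0,\;\ell_3-(w-\mu-\rho),\;\ell_2-\rho,\;\ell_1-\mu}$ together with the constraint $w-\mu-\rho\le\ell_3$ inherited from $\mathcal{U}_w(\bm{\ell})$, which you also drop from the stated ranges. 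The trinomial printed in the lemma lumps the first two bins into one and therefore exceeds the true type-class size by the factor $\binom{\ell_0+\ell_3-(w-\mu-\rho)}{\ell_0}$; the stated bound is thus a further (valid but strictly weaker) relaxation of what \thmref{thm:Extended_Poltyrev} actually yields, not its literal specialization. A complete proof must either derive the quadrinomial form and then observe that the printed form follows by enlarging each term, or flag the display as containing typos, as it demonstrably does elsewhere (e.g.\ $\bm{\ell}\in\mathbb{Z}_{+}^n$ where $\mathbb{Z}_{+}^5$ is meant).
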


%%%%%%%%%%%%%%%%%%%%%%%%%%%%%%%%%%%%%%%%%%%%%%%%%%%%%%%%%%%%%%%%%%%%%%%%

\section{Computationally Simplified Bound} \label{sec:rect_bounds}

By Remark~\ref{rem1}, the computational complexity of bound \eqref{eq:DMC_bound_explict} is $O(n^{2(|\mathcal{Y}|+1)})$. Unfortunately, when the block length is large, and even more so, when the output alphabet is large, this translates to high computational complexity. 
We now reduce the computational complexity of the bound to linear complexity, i.e., $O(n)$, at the cost of  inducing some loss in  tightness. We do so by limiting the search over types, from ranging over the full set $\mathbbm{P}_n$, to a restricted ``rectangular set of types'' defined next.

%In this section we make use of the following definition of a ``rectangular'' set of types. Specifically, 
Let $\mathbf{m}=(\Dsub{m}{-M+1},\ldots \Dsub{m}{M})$ be a length-$|\mathcal{Y}|-1$ vector, the $i$-th entry of which restricts the range of the occurrence of the $i$-th symbol, for all $i$ \emph{except for} $i=-M$. Then, the rectangular set of types is defined as follows. 

% region  use the definition of the following set:\Ariel{Add explanation of $\tilde{U}$}

\begin{align}
\tilde{\mathcal{U}}_n(\mathbf{m}) &\triangleq \Bigg\{ \bm{\ell} : \sum_{i=-M}^M \ell_i=n,\; 
%\Dsub{\ell}{-M},
{\ell_i \in\mathbb{Z}_{+}, \; i=-M,\ldots, M}
\nonumber \\
    &\qquad , 0\leq \ell_j \leq m_j, \; j=-M+1,\ldots, M \Bigg\}.\label{eq:rect_region_minus_firsr_def}
\end{align}

Restricting the type search in Theorem~\ref{thm:Extended_Poltyrev} to the set $\tilde{\mathcal{U}}_n(\mathbf{m})$ results in the following bound.

\begin{cor}[Rectangle-type search region]
For a BMS channel \eqref{eq:BMS}, and $\mathbf{m}=(\Dsub{m}{-M+1},\ldots \Dsub{m}{M}), m_i\in\mathbb{Z}_{+}$, the error probability is bounded by
\begin{align}
    &P_e(\mathbf{m}) \leq \sum_{\bm{\ell} \in \tilde{\mathcal{U}}_n(\bm{m)}} \left(\prod_{j=-M}^M[\Dsub{\varepsilon}{j|0}]^{\ell_{j}}\right) 
    \min\left\{\sum_{w=d_{\rm{min}}}^n S_w \right.\nonumber\\ 
    & \left. \sum_{\substack{\bm{\mu} \in \mathcal{U}_w(\bm{\ell}) \\   \sum_{j=-M}^M\mu_j\rm{LLR}_j\leq 0}} \binom{w}{\Dsub{\mu}{-M},\dots,\Dsub{\mu}{M}} \binom{n-w}{\Dsub{\ell}{-M}-\Dsub{\mu}{-M},\dots,\Dsub{\ell}{M}-\Dsub{\mu}{M}},\right.\nonumber \\
    & \left. \binom{n}{\Dsub{\ell}{-M},\ldots,\Dsub{\ell}{M}}\right\} +  \sum_{\bm{\ell} \nin \tilde{\mathcal{U}}_n(\bm{m)}} \left(\prod_{j=-M}^M\left(\Dsub{\varepsilon}{j|0}\right)^{\ell_{j}}\right)\binom{n}{\Dsub{\ell}{-M},\ldots,\Dsub{\ell}{M}}. \label{eq:DMC_bound_rec_search}        
\end{align}
\end{cor}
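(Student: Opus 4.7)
The plan is to specialize the proof of \thmref{thm:Extended_Poltyrev} to a fixed (rather than type-wise optimized) choice of the large-noise region. Specifically, identify $\mathcal{P}_n$ with the rectangular set $\tilde{\mathcal{U}}_n(\mathbf{m})$, viewed as a set of types via the correspondence $nP_n \leftrightarrow \bm{\ell}$ from \eqref{eq:PtoN}, and take $\mathcal{P}_n^c$ to be its complement in $\mathbbm{P}_n$. The Gallager-style partition in \eqref{eq:thm:eqn100}--\eqref{eq:thm:eqn140} imposes no restriction on $\mathcal{P}_n$ beyond its being a set of types, so the entire chain through the pairwise union bound carries over verbatim with this particular choice.

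Next, I would reuse the computations \eqref{eq:thm1:eqn200}--\eqref{eq:thm1:eqn240} to evaluate the per-type pairwise-error contribution for types in $\tilde{\mathcal{U}}_n(\mathbf{m})$, together with \eqref{eq:thm1:eqn250} for the large-noise contribution from the complement. Substituting the explicit form \eqref{eq:error_types_LLR} of $\mathcal{P}_w^\mathcal{E}$ and the multinomial expressions for $|T(P_w)|$, $|T(P_{n-w})|$, and $|T(P_n)|$ then produces exactly the two sums appearing in \eqref{eq:DMC_bound_rec_search}, initially with the first sum not containing a $\min$. To recover that $\min$, I would apply the elementary bound $P(\mathcal{E}, \mathbf{y} \in P_n \mid \mathbf{0}) \leq P(\mathbf{y} \in P_n \mid \mathbf{0})$, with the right-hand side equal to $\binom{n}{\Dsub{\ell}{-M},\ldots,\Dsub{\ell}{M}} \prod_{j=-M}^M \left(\Dsub{\varepsilon}{j|0}\right)^{\ell_j}$, so that each union-bound contribution can be replaced by the smaller of itself and this multinomial-probability expression without loosening the bound.

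No substantive obstacle is expected: the corollary is essentially a direct specialization of \thmref{thm:Extended_Poltyrev} in which the large-noise region is fixed a~priori rather than tuned on a per-type basis. The only minor bookkeeping point worth verifying is that omitting $m_{-M}$ from the definition \eqref{eq:rect_region_minus_firsr_def} of $\tilde{\mathcal{U}}_n(\mathbf{m})$ is innocuous, since the constraint $\sum_j \ell_j = n$ already pins down $\ell_{-M}$ in terms of $\ell_{-M+1},\ldots,\ell_M$, so no information is lost by parameterizing the rectangle by the last $|\mathcal{Y}|-1$ coordinates. The trade-off embodied by the corollary is thus tightness for computational savings, the latter stemming from the drastically smaller outer summation domain.
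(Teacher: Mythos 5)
Your proposal is correct and matches the paper's reasoning: the paper gives no separate proof, stating only that the corollary follows by restricting the type search in Theorem~\ref{thm:Extended_Poltyrev} to $\tilde{\mathcal{U}}_n(\mathbf{m})$ and bounding the contribution of the complementary types by the trivial per-type estimate $P(\mathcal{E},\mathbf{y}\in P_n\mid\mathbf{0})\leq P(\mathbf{y}\in P_n\mid\mathbf{0})$, exactly as you describe. Your remark that the constraint $\sum_j \ell_j = n$ makes the omission of $m_{-M}$ harmless is a correct and worthwhile bookkeeping check.
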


As a result of this bounding technique, the bound is divided into three regions: the first is the inner region of types where the error events are the union of all pairwise error events; the second is the region inside the predefined rectangular region but outside the inner region; and the third contains the types outside the rectangular region. Thus, 
the union of the second and third regions constitutes the ``large-noise region" of the bound.

A looser bound can be derived by upper bounding the second term in \eqref{eq:DMC_bound_rec_search}, i.e., the probability that the output vector falls in the ``large-noise region''. Specifically, 
%the second term of the bound, i.e., 
%the tail probability 
%\Uri{the probability that the %output vector is in the ``large-%noise region''}
this probability may be bounded by applying the Chernoff bound for the event that $\frac{m_j}{n} > \varepsilon_{j|0}$, where $j=-M+1,\ldots,M$, and then applying a union bound over the different values of  $j$, resulting in
%Specifically,  the second term is upper bounded by $\sum_{j=-M+1}^{M}\Exp{-n D\left( \frac{m_j}{n} \middle\| \Dsub{\varepsilon}{j|0} \right) } $. Therefore, the bound can be written as

\begin{align}
& P_e(\mathbf{m}) \leq \sum_{\bm{\ell} \in \tilde{\mathcal{U}}_n(\bm{m)}} \left(\prod_{j=-M}^M\left(\Dsub{\varepsilon}{j|0}\right)^{\ell_{j}}\right) \min\Bigg\{\sum_{w=d_{\rm{min}}}^n S_w  \nonumber \\
& \sum_{\substack{\bm{\mu} \in \mathcal{U}_w(\bm{\ell})\\ \sum_{j=-M}^M\mu_j\rm{LLR}_j \leq 0}} \binom{w}{\Dsub{\mu}{-M},\dots,\Dsub{\mu}{M}} \binom{n-w}{\Dsub{\ell}{-M}-\Dsub{\mu}{-M},\dots,\Dsub{\ell}{M}-\Dsub{\mu}{M}}\Bigg\} \nonumber \\ 
&\qquad\qquad\qquad\qquad + \sum_{j=-M+1}^{M}\Exp{-n D\left( \frac{m_j}{n} \middle\| \Dsub{\varepsilon}{j|0} \right) }. \label{eq:DMC_bound_rec_sel_Chernoff}        
\end{align}

\section{Numerical Results}\label{sec:numerical}

The bounds derived in this work—
%the general bound for binary-input symmetric-output 
the general bound of Theorem~ \ref{thm:Extended_Poltyrev}, i.e., 
\eqref{eq:DMC_bound_explict}, 
and its reduced-complexity corollary, stated in  
%the rectangular-type selection bound 
%\eqref{eq:DMC_bound_rec_sel} and %the rectangular-type search region bound 
\eqref{eq:DMC_bound_rec_search}, 
are illustrated by applying them 
%for several scenarios. Specifically, we apply the bounds 
to the hybrid BSC-BEC  channel defined in \secref{sec:specific_channel_BSC-BEC}.

The bounds were evaluated for  BCH codes of several rates, with a block length of $n=127$, as well as for a binomial Hamming weight distribution, i.e., the weight distribution of random linear codes. The weight distributions for the BCH codes are reported in \cite{kasami1966weight, MacSloane, berlekampalgebraic_1968}.

The results also compared with
%which represents the bound for an ensemble of random codes, 
the Shulman-Feder (SF) bound \cite{ShulmanFederBound} and %an \emph{improved version} of the 
the Miller-Burshtein (MB) bound that are also applicable to linear codes with a known spectrum.\footnote{The MB bound  is obtained by finding the optimal partitioning of the sets $U$ and $U^c$ in \cite[Thm. 1]{ldpcens}, which turns out to require a complexity of only $O(n\log(n))$.}
As a benchmark, we also plot the random-coding Gallager error exponent  \cite{Gallager68}, evaluated  for the parameters chosen. 
%{The improved version of the Miller-Burshtein bound from \cite[Thm. 1]{ldpcens} is obtained by finding the optimal partitioning of the sets $U$ and $U^c$, which requires a complexity of $O(n\log(n))$.}
%
%for linear codes with a known %Hamming weight distribution.

\begin{figure}[htbp]
    \centering
        \includegraphics[scale=0.6
        ]{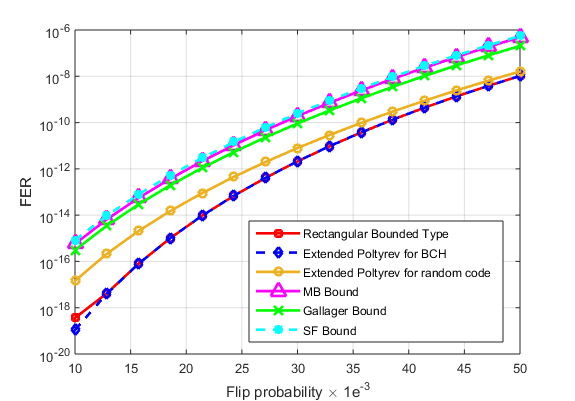}
        \caption{The Frame Error Rate (FER) of the hybrid BSC-BEC channel for BCH($n=127, k=29, d_{\rm{min}}=43$) with a code rate of $R=0.23$ and an erasure probability of $\delta = 0.1$.}
        \label{fig:BCH_127_29_43}
\end{figure}

\begin{figure}[htbp]
    \centering
        \includegraphics[scale=0.6
        ]{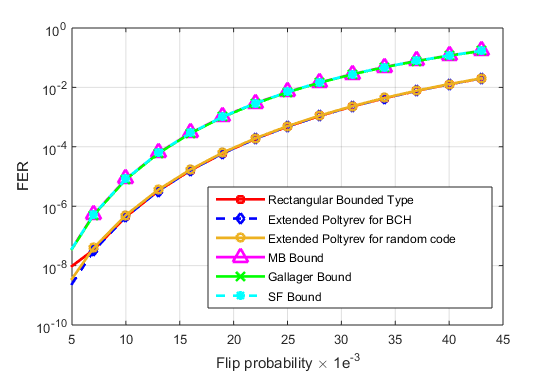}
        \caption{The Frame Error Rate (FER) of the hybrid BSC-BEC channel for BCH($n=127, k=64, d_{\rm{min}}=21$) with a code rate of $R=0.503$ and an erasure probability of $\delta = 0.1$.}
        \label{fig:BCH_127_64_21}
\end{figure}

\begin{figure}[htbp]
    \centering
        \includegraphics[scale=0.6
        ]{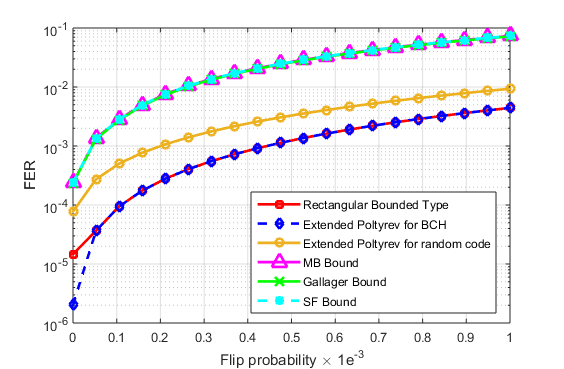}
        \caption{The Frame Error Rate (FER) of the hybrid BSC-BEC  channel for BCH($n=127, k=113, d_{\rm{min}}=5$) with a code rate of $R=0.89$ and an erasure probability of $\delta = 0.01$.}
        \label{fig:BCH_127_113_5}
\end{figure}

In \figref{fig:BCH_127_29_43}, \ref{fig:BCH_127_64_21} and \ref{fig:BCH_127_113_5}, the Frame Error Rate (FER) is plotted as a function the crossover probability, due to the BSC part, at a fixed erasure probability, for BCH(127,29,43), BCH(127,64,21), and BCH(127,113,5), respectively. As can be seen, the extended Poltyrev bound \eqref{eq:BSC-BEC_bound}  is significantly tighter than the MB and SF bounds. However, the computation of the bound \eqref{eq:BSC-BEC_bound} is quite demanding. Nonetheless, the 
%rectangular-type selection bound 
bounded-type search bound
\eqref{eq:DMC_bound_rec_search} %exhibit good performance 
is rather close to the extended Poltyrev bound, in this example, 
%,
%to the extended Poltyrev bound \eqref{eq:BSC-BEC_bound}, 
with linear computational complexity in the block length $n$.

It is interesting to note that, taking the extended Poltyrev bound as a figure of merit, the BCH code exhibits significantly better performance as compared to a random code. 
This is due to the fact that, for the moderate block length  $n=127$, the BCH code has a larger minimum distance than the Gilbert-Varshamov bound \cite{Gilbert52,Varshamov57}.

%%%%%%%%%%%%%%%%%%%%%%%%%%%%%%%%%%%%%%%%%%%%%%%%%%%%%%%%%%%%%%%%%%%%%%%%

\section{Discussion}\label{sec:discussion}
This work extended the Poltyrev bound to general binary memoryless symmetric discrete-output channels using the method of types. 
The bound is of practical relevance to scenarios where the output alphabet size is small and/or the block length is small to moderate. Several numerical examples were presented in which the bound is significantly tighter than previously reported bounds. A reduced-complexity variant of the bound was also derived.

% To illustrate the result, the bound was specialized for the BEC, a hybrid channel consisting of a BSC and a BEC, and quinary channels. The bound computation is polynomial in the block length, specifically $O(n^{2(|\mathcal{Y}|+1)})$.

% Interestingly, for codes with infinite block length, the random code is known to be optimal \cite{Gallager65}. However, for the BSC-BEC hybrid channel, i.e., ternary output, comparing the random code with a BCH code for finite block length ($n=127$), the BCH code introduces a lower bound for the error probability compared to the random code. This property arises because the BCH code has a lower minimal Hamming distance than the random code, which is characterized by the Gilbert-Varshamov bound \cite{Gilbert52,Varshamov57}}

%%%%%%%%%%%%%%%%%%%%%%%%%%%%%%%%%%%%%%%%%%%%%%%%%%%%%%%%%%%%%%%%%%%%%%%%

\bibliographystyle{IEEEtran}
\bibliography{tal1}

% Generated by IEEEtran.bst, version: 1.14 (2015/08/26)
\begin{thebibliography}{10}
\providecommand{\url}[1]{#1}
\csname url@samestyle\endcsname
\providecommand{\newblock}{\relax}
\providecommand{\bibinfo}[2]{#2}
\providecommand{\BIBentrySTDinterwordspacing}{\spaceskip=0pt\relax}
\providecommand{\BIBentryALTinterwordstretchfactor}{4}
\providecommand{\BIBentryALTinterwordspacing}{\spaceskip=\fontdimen2\font plus
\BIBentryALTinterwordstretchfactor\fontdimen3\font minus \fontdimen4\font\relax}
\providecommand{\BIBforeignlanguage}[2]{{%
\expandafter\ifx\csname l@#1\endcsname\relax
\typeout{** WARNING: IEEEtran.bst: No hyphenation pattern has been}%
\typeout{** loaded for the language `#1'. Using the pattern for}%
\typeout{** the default language instead.}%
\else
\language=\csname l@#1\endcsname
\fi
#2}}
\providecommand{\BIBdecl}{\relax}
\BIBdecl

\bibitem{gallager1963low}
R.~G. Gallager, ``Low density parity check codes, monograph,'' 1963.

\bibitem{shamai2002variations}
S.~Shamai and I.~Sason, ``Variations on the gallager bounds, connections, and applications,'' \emph{IEEE Transactions on Information Theory}, vol.~48, no.~12, pp. 3029--3051, 2002.

\bibitem{hof2009performance}
E.~Hof, I.~Sason, and S.~Shamai, ``Performance bounds for nonbinary linear block codes over memoryless symmetric channels,'' \emph{IEEE transactions on information theory}, vol.~55, no.~3, pp. 977--996, 2009.

\bibitem{PoltyrevBound94}
G.~Poltyrev, ``Bounds on the decoding error probability of binary linear codes via their spectra,'' \emph{IEEE Transactions on Information Theory}, vol.~40, no.~4, pp. 1284--1292, 1994.

\bibitem{InsideNAND_book}
R.~Micheloni, L.~Crippa, and A.~Marelli, \emph{Inside NAND Flash Memories}, 2010.

\bibitem{3DFlash_book}
R.~Micheloni, \emph{3D Flash Memories}.\hskip 1em plus 0.5em minus 0.4em\relax Springer Netherlands, 2016.

\bibitem{Kurkoski_TIT_2014}
B.~M. Kurkoski and H.~Yagi, ``Quantization of binary-input discrete memoryless channels,'' \emph{IEEE Transactions on Information Theory}, vol.~60, no.~8, pp. 4544--4552, 2014.

\bibitem{CoverBook2Edition}
T.~M. Cover and J.~A. Thomas, \emph{Elements of Information Theory}, 2nd~ed.\hskip 1em plus 0.5em minus 0.4em\relax New York: Wiley, 2006.

\bibitem{kasami1966weight}
T.~Kasami, ``Weight distributions of {B}ose-{C}haudhuri-{H}ocquenghem codes,'' \emph{Coordinated Science Laboratory Report no. R-317}, 1966.

\bibitem{MacSloane}
F.~J. MacWilliams and N.~J.~A. Sloane, \emph{The Theory of Error Correcting Codes}.\hskip 1em plus 0.5em minus 0.4em\relax Elsevier, 1977.

\bibitem{berlekampalgebraic_1968}
E.~R. Berlekamp, \emph{Algebraic coding theory}.\hskip 1em plus 0.5em minus 0.4em\relax McGraw-Hill, 1968.

\bibitem{ShulmanFederBound}
N.~Shulman and M.~Feder, ``Random coding techniques for nonrandom codes,'' \emph{IEEE Transactions on Information Theory}, vol.~45, pp. 2101--2104, Sep. 2002.

\bibitem{ldpcens}
G.~Miller and D.~Burshtein, ``Bounds on the maximum likelihood decoding error probability of low density parity check codes,'' \emph{IEEE Transactions on Information Theory}, vol. IT-47, pp. 2696--2710, Nov. 2001.

\bibitem{Gallager68}
R.~G. Gallager, \emph{Information {T}heory and {R}eliable {C}ommunication}.\hskip 1em plus 0.5em minus 0.4em\relax New York, N.Y.: Wiley, 1968.

\bibitem{Gilbert52}
E.~N. Gilbert, ``A comparison of signalling alphabets,'' \emph{Bell Labs Technical Journal}, vol.~31, pp. 504--522, 1952.

\bibitem{Varshamov57}
R.~R. Varshamov, ``Estimate of the number of signals in error correcting codes,'' \emph{Dokl. Akad. Nauk SSSR}, vol. 117, no.~5, pp. 739--741, 1957.

\end{thebibliography}

\end{document}